 \newtheorem{theorem}{Theorem}
 \newtheorem{definition}{Definition}
\title{Automated Discovery of Geometrical Theorems in GeoGebra}
\author{Zoltán Kovács
\institute{The Private University College of Education of the Diocese of Linz\\Linz, Austria}\thanks{The
work was partially supported by the grant 
PID2020-113192GB-I00 from the Spanish MICINN.}
\email{zoltan@geogebra.org}
\and
Jonathan H.~Yu
\institute{Gilman School\\Baltimore, Maryland, USA}
\email{jonathanhy314@gmail.com}
}
\begin{document}
\maketitle

\begin{abstract}
We describe a prototype of a new experimental GeoGebra command and tool, \texttt{Discover},
that analyzes geometric figures for salient patterns, properties, and theorems.
This tool is a basic implementation of automated discovery in elementary planar geometry.
The paper focuses on the mathematical background of the implementation,
as well as methods to avoid combinatorial explosion when storing
the interesting properties of a geometric figure.
\end{abstract}

\bibliographystyle{eptcs}

\section{Introduction}

In this paper we introduce a new GeoGebra command and tool \texttt{Discover},
which is available in the development GitHub repository\footnote{\url{https://github.com/kovzol/geogebra-discovery}}.
This research is closely related to the former
project \textit{Automated Geometer}\footnote{\url{https://github.com/kovzol/ag}}
(see \cite{adg-ag,aisc-ag,LNAI11110-ag} for further details).

Given a Euclidean geometry construction drawn in GeoGebra, suppose a user
wants to know if a given object $O$ has some ``interesting features,''
such as relevant theorems or properties.
This object can be a point, a line, a circle, or something else, although
in the current implementation $O$ will always be a point
contained in a given construction.
Without any further user input,
the \texttt{Discover} command will then analyze $O$
for its interesting and relevant features, and present them to the user
as a list of formulas, as well as graphical illustrations.

\begin{figure}\centering
\includegraphics[scale=0.75]{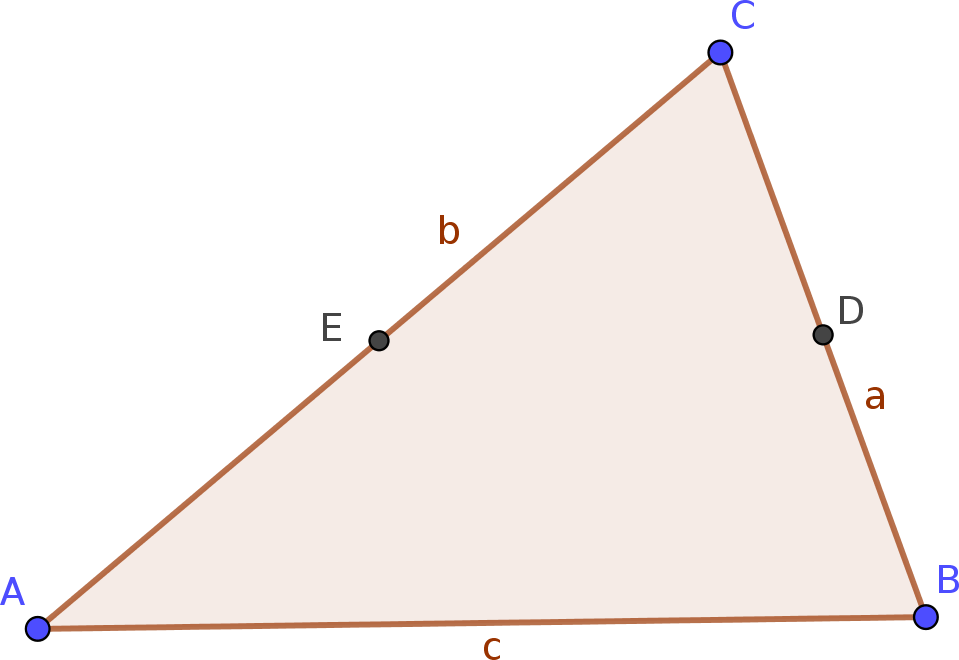}
\caption{Initial setup for a discovery}
\label{midline1}
\end{figure}

For example, let $ABC$ an arbitrary triangle, and let $D$ and $E$ be
the midpoints of $BC$ and $AC$, respectively (Fig.~\ref{midline1}). Has $D$ any
noteworthy features?
Yes: $DE$ is parallel to $AB$, independent of the position of $A$, $B$ and $C$.
Indeed, the command \texttt{Discover($D$)} confirms this observation with
the output shown in Fig.~\ref{rel-midline}; GeoGebra adds lines $DE$ and $AB$
in the same color (Fig.~\ref{midline2}).
(Note, however, that the current implementation of GeoGebra does not report that $2\cdot|DE|=|AB|$.)
Also, the software reports the somewhat trivial finding that the segments $BD$ and $CD$
are congruent, with $BD$ and $CD$ highlighted in the same color.
(The fact that ``$AE$ and $CE$ are congruent'' is not reported because these segments do not have
a direct relationship with the point $D$.)

This output can be
obtained by selecting the Discover tool 
\raisebox{-.15\height}{\includegraphics[width=0.4cm]{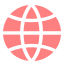}}
in GeoGebra's toolbox,
and then clicking on the point $D$. 
This functionality is currently implemented in both GeoGebra versions Classic 5 and 6
and will soon be incorporated into the official GeoGebra versions.

\begin{figure}\centering
\includegraphics[scale=0.4]{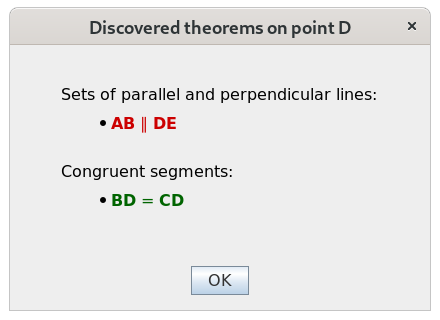}
\caption{Output window of the \texttt{Discover} command that reports the \textit{Midline theorem}}
\label{rel-midline}
\end{figure}

\begin{figure}\centering
\includegraphics[scale=0.75]{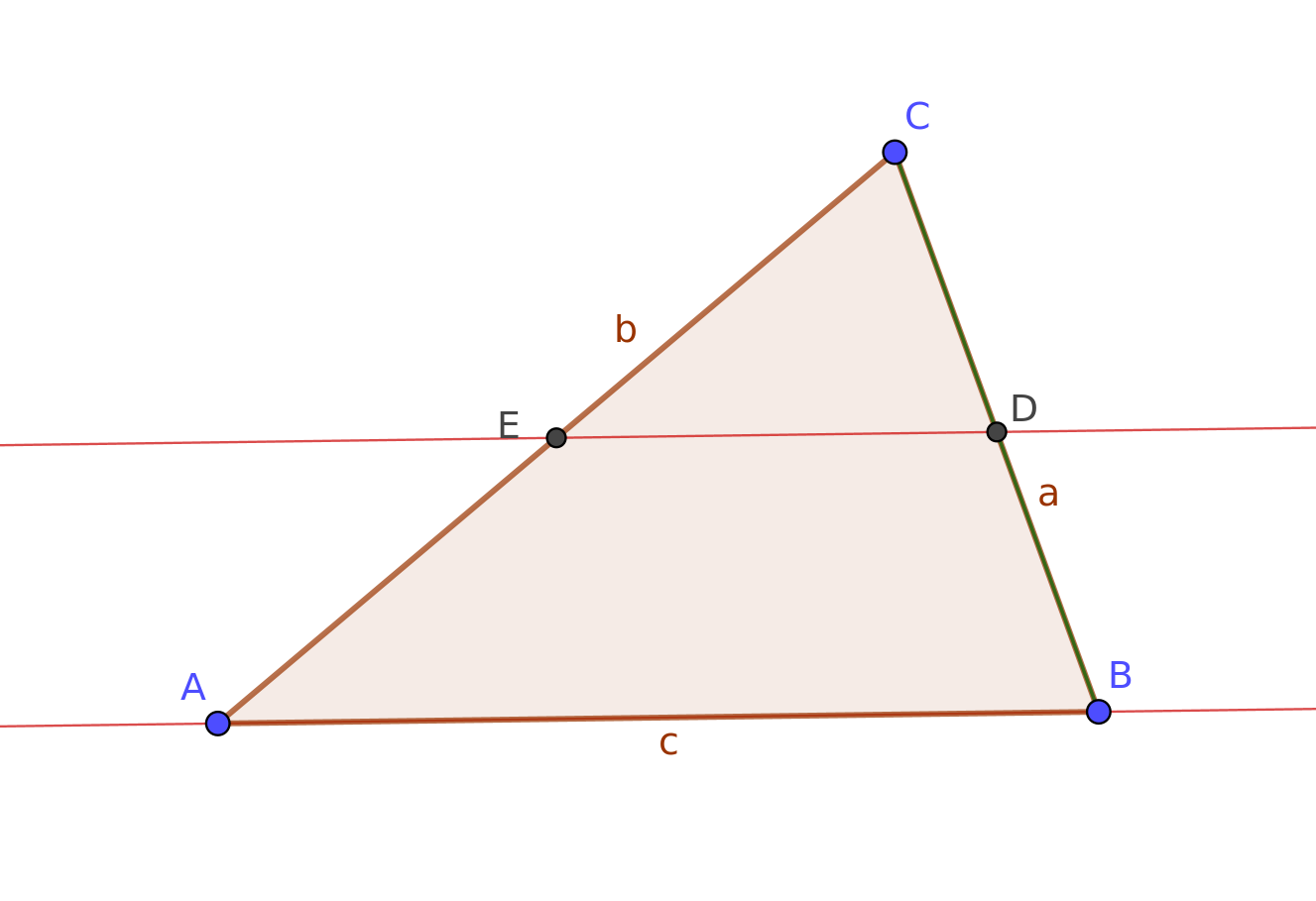}
\caption{Further output of the \texttt{Discover} command}
\label{midline2}
\end{figure}

The \texttt{Discover} command calculates its outputs with the following algorithm:
First, all points are analyzed to determine whether they are
the same as another point.
Second, all possible point triplets are examined for collinearity.
Third, all possible subsets containing four points on the figure are checked
for concyclicity.
With knowledge of collinear points, separate lines can be uniquely defined, in order
to determine whether they are parallel. 
Next, congruent segments can be identified by considering the pairs of all possible point pairs. 
Finally, perpendicular lines are identified.
This combination of numerical and symbolic processes provides the results of the \texttt{Discover} command.

Our second example illustrates a more complicated problem.
A regular hexagon  $ABCDEF$ is given in Fig.~\ref{hexagon1}.
Point $G$ is defined as the intersection of $AD$ and $BE$.
In addition, $H=BE\cap CF$, $I=AD\cap CF$.
The points $G$, $H$ and $I$ may have trivial differences in their numerical representations, but in the geometrical sense they should be equal.
For example, the $y$-coordinates of $G$ and $H$, as computed numerically by GeoGebra,
differ only in the least significant decimal place shown.
However, the final calculations to prove that they are identical will be symbolic and exact.

\begin{figure}\centering
\includegraphics[scale=0.4]{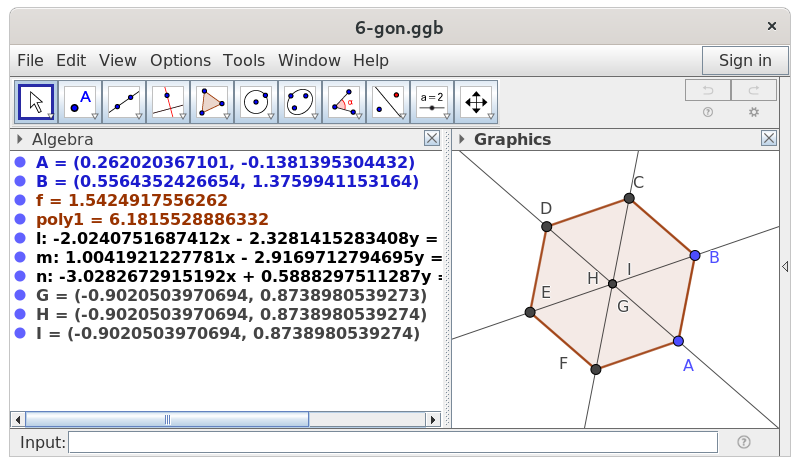}
\caption{Initial setup for another discovery}
\label{hexagon1}
\end{figure}

To determine any interesting features of point $F$,
we enter the command \texttt{Discover($F$)}.
GeoGebra reports a set of properties in a message box
and adds some additional outputs to the initial setup (Fig.~\ref{rel-hexagon}).

\begin{figure}\centering
\includegraphics[scale=0.4]{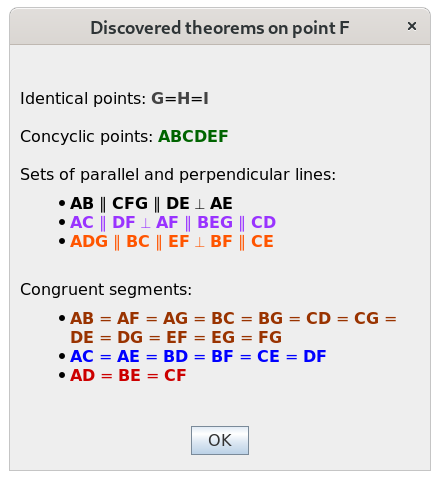}\ 
\includegraphics[scale=0.75]{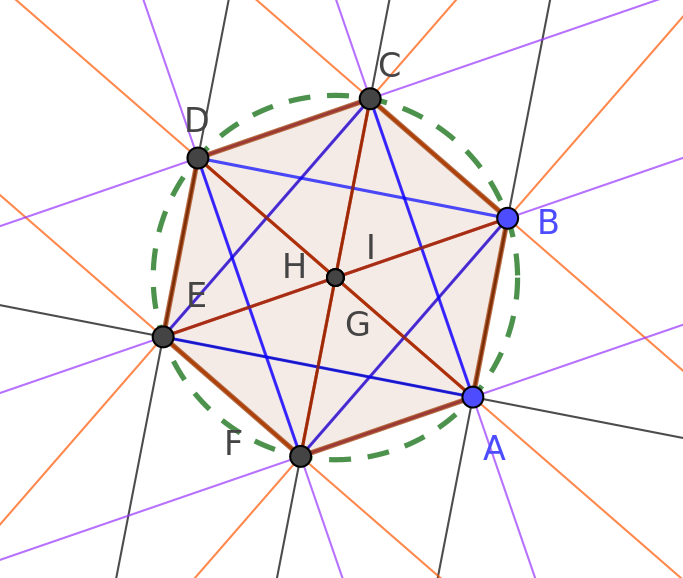}
\caption{Output window that reports several theorems related to point $F$,
and the geometric output of discovery}
\label{rel-hexagon}
\end{figure}

Here we learn that the points $G$, $H$ and $I$ are all equal, as stated above.
We also see that concylic points are reported as a single item, not as separate data.
In addition, parallel line sets and their perpendicular supplementors are classified into three different sets,
each colored with the same color.
This allows the user to distinguish among the rectangular grids related to the objects of the figure.
Finally, there are three sets of congruent segments.
This approach in computation and reporting helps avoid combinatorial explosion.

\section{Mathematical background}\label{sec2}

The above mentioned strategies have some similarities to the ones introduced in \cite{song}, but here we
focus on minimizing the number of objects that have to be compared in the process
that practically compares all objects with all other objects.

Our current implementation deals with \textit{points}, \textit{lines},
\textit{circles} and \textit{parallel lines} (or \textit{directions}) and
their perpendicular supplementors, and \textit{congruent segments}.

A \textit{geometric point} $P$ is a GeoGebra object, described by the \texttt{GeoPoint} class%
\footnote{See GeoGebra's source code at \url{github.com/geogebra/geogebra} for more details.}. While we will not provide
a detailed definition of a geometric point, generally speaking, it is an object with a very complex structure
containing two real coordinates, several style settings (including size and color, for example)
and other technical details that are used in the application. Some geometric points are
dependent of other geometric points or other geometric objects---this hierarchy is stored in the set of \texttt{GeoPoint}s, too.

Independent of the detailed definition of a geometric point,
we can still define the notion of \textit{point} in our context.

\begin{definition} A set of geometric points ${\cal P}=\{P_1,P_2,\ldots,P_n\}$ is called a point if for all
different $P,Q\in {\cal P}$ the points $P$ and $Q$ are identical in general.
\end{definition}

Henceforth, unless otherwise mentioned, we will consider points according to the definition above,
not as geometric points.

Here, we do not precisely define when two points are identical \textit{in general}.
Instead, we will illustrate the concept of point identicality with the following example.
Consider geometric points $P_1$, $P_2$, $P_3$ and $P_4$ that form a parallelogram.
Now define $P_5$ and $P_6$ as the midpoint of $P_1$ and $P_3$, and $P_2$ and $P_4$,
respectively. With these definitions, $P_5$ and $P_6$ are identical, because the diagonals of a parallelogram
always bisect each other. In a dynamic geometry setting like GeoGebra, this simply means that by changing some points
of the set $\{P_1,P_2,P_3,P_4\}$, the points $P_5$ and $P_6$ will still share the same position in the plane. (See
Fig.~\ref{parallelogram1}. Here the construction is determined only by the points $P_1$, $P_2$ and $P_3$:
after they are freely chosen, the point $P_4$ must be dependent and uniquely defined as
the intersection of the two parallel lines to $P_1P_2$ and $P_2P_3$, respectively, through $P_3$ and $P_1$.)

\begin{figure}\centering
\includegraphics[scale=0.75]{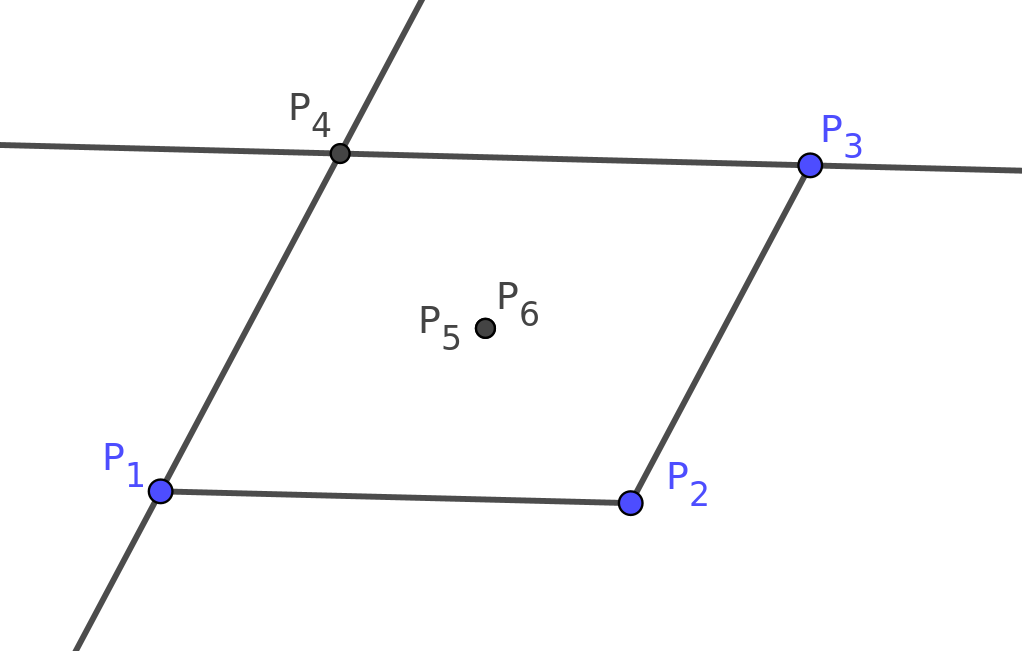}
\caption{Points $P_5$ and $P_6$ are defined as midpoints of opposite vertices of parallelogram $P_1P_2P_3P_4$}
\label{parallelogram1}
\end{figure}

Statements considered ``generally true'' are true in most typical cases,
but not 100\% of the time, especially in cases of degenerate objects.
For example, it is generally true that altitudes of a triangle meet at a point---but not always, since a degenerate triangle
``usually'' has three parallel ``altitudes,'' unless two (or even three!) vertices of the triangle coincide.
(See \cite{Chou_1987} for more details on the concept of general truth and degeneracies.)

\begin{definition} A set of points $\ell=\{P_1,P_2,\ldots,P_n\}$ is called a line if for all
different $P,Q,R\in \ell$ the points $P$, $Q$ and $R$ are collinear in general.
\end{definition}

For example, the set $\ell=\{C,F,G\}$ in Fig.~\ref{rel-hexagon} forms a line.

\begin{definition} A set of points ${\cal C}=\{P_1,P_2,P_3,\ldots,P_n\}$ is called a circle if for all
different $P,Q,R,S\in {\cal C}$ the points $P$, $Q$, $R$ and $S$ are concyclic in general.
\end{definition}

\begin{definition} A set of lines $\vec{D}=\{\ell_1,\ldots,\ell_n\}$ is called parallel lines (or a direction) if for all
different $\ell,m\in \vec{D}$ the lines $\ell$ and $m$ are parallel in general.
\end{definition}

\begin{definition} A pair of parallel lines $\vec{D_1}$, $\vec{D_2}$ is called perpendicular or parallel lines if for all
different $\ell\in \vec{D_1}$, $m\in \vec{D_2}$ the lines $\ell$ and $m$ are parallel or perpendicular in general.
We will use the notation $\vec{D_1}\upVdash\vec{D_2}$ in this case.
\end{definition}

\begin{definition} A set $\overline{s}=\{P,Q\}$ of two points is called a segment.
\end{definition}

\begin{definition} A set of segments $s=\{\overline{s_1},\ldots,\overline{s_n}\}$ is called equal length segments (or congruent segments) if for all
different $\overline{s_1},\overline{s_2}\in s$ the segments $\overline{s_1}$ and $\overline{s_1}$ are equally long in general.
\end{definition}

In fact, GeoGebra Discovery uses
a more general concept of being identical: it allows two points (or two objects) to have a kind of
relationship also if it is true just \textit{on parts} (see \cite{rmc-top} for more details).

The main idea of storing the objects is that points, lines, circles, directions (and their perpendicular supplementors) and equally long segments designate equivalence classes, that is:

\begin{theorem}Let $\ell$ and $m$ be lines. Then, for all different points $P,Q,R\in m$,
if $\{P, Q\}\subset \ell$, then $R\in \ell$; that is, $\ell = m$.
\end{theorem}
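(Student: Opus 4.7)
The plan is to view a line (in the paper's sense) as an equivalence class of points that are generically collinear with a fixed pair of distinct witnesses, and to show that once $\ell$ and $m$ share two distinct witnesses $P, Q$, every point of one is forced onto the unique geometric line through those witnesses, and therefore into the other class. Concretely, I would argue: (i)~every point of $\ell$ lies, in the generic sense, on the ordinary Euclidean line $\overline{PQ}$; (ii)~every point of $m$, including $R$, also lies on $\overline{PQ}$; (iii)~consequently $R$ can be adjoined to $\ell$ without violating the triple-collinearity condition of Definition~2, which (reading $\ell$ and $m$ as the maximal such sets, as the preceding paragraph indicates) yields $\ell = m$.

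First I would carry out step~(i). Pick any $X \in \ell$ with $X \ne P, Q$; then $\{P, Q, X\}$ is a triple of distinct elements of $\ell$, so by Definition~2 the three points are collinear in general. Since $P$ and $Q$ are distinct, the generic line $\overline{PQ}$ is well-defined, and $X$ lies on it. Thus $\ell \subseteq \overline{PQ}$ (with $P, Q$ themselves lying on $\overline{PQ}$ trivially). Applying exactly the same reasoning to $m$ and its triple $\{P, Q, R\}$ gives $R \in \overline{PQ}$, and more generally $m \subseteq \overline{PQ}$.

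Next I would verify that $\ell \cup \{R\}$ still satisfies the defining property of a line. Take any three distinct points $A, B, C \in \ell \cup \{R\}$. By the previous step each of $A, B, C$ lies on $\overline{PQ}$, so they are collinear in general, as required. Hence $R \in \ell$ in the equivalence-class sense of the paper. The symmetric argument shows every point of $m$ lies in $\ell$ and vice versa, so $\ell$ and $m$ represent the same equivalence class, i.e.\ $\ell = m$.

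The main obstacle, and the only place where care is really needed, is the phrase ``in general.'' The argument implicitly uses that $P$ and $Q$ remain distinct under generic perturbation, so that $\overline{PQ}$ is a well-defined geometric line; this is justified because ``different'' points in the paper's vocabulary means distinct in general, not merely distinct at a single instance. A secondary subtlety is the passage from ``$R \in \ell$'' to ``$\ell = m$'': it relies on interpreting $\ell$ and $m$ as \emph{maximal} sets satisfying Definition~2, which is exactly the equivalence-class viewpoint the paper adopts when it says that lines ``designate equivalence classes.'' Once both of these interpretive points are granted, the proof reduces to the elementary observation that three generically collinear triples forced through a common pair $P, Q$ must all live on the single ordinary line $\overline{PQ}$.
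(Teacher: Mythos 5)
Your argument is correct and rests on exactly the same fact the paper invokes: two distinct points determine a unique line, so every further point of $\ell$ or $m$ generically lies on $\overline{PQ}$ and the two sets coincide. Your write-up simply makes explicit the genericity and maximality (equivalence-class) readings that the paper's one-line proof leaves implicit.
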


\begin{proof}
In Euclidean geometry\footnote{The statement also holds in absolute geometry.}
two points always designate a unique line.
\end{proof}

\begin{theorem}Let ${\cal C}$ and ${\cal D}$ be circles. Then, for all different points 
$P,Q,R,S\in{\cal D}$, if $\{P,Q,R\}\subset {\cal C}$, then $S\in{\cal C}$; that is, ${\cal C}={\cal D}$.
\end{theorem}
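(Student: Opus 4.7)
The plan is to follow the exact template used by the author for the preceding theorem, which invoked the uniqueness of the line through two distinct points. The classical fact that plays the analogous role here is that three distinct non-collinear points in the Euclidean plane determine a unique circle. My proof will therefore just cite that fact, once the hypotheses have been properly unpacked.

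First I would unpack the hypothesis: it names four distinct points $P, Q, R, S \in {\cal D}$, so by the definition of a circle they are concyclic in general, and in particular any three of them are non-collinear in general (four points of a straight line are not concyclic in general). Hence $P, Q, R$ lie on a unique geometric circle $\kappa$. Next, since $\{P, Q, R\} \subset {\cal C}$ as well, every other point of ${\cal C}$ is concyclic in general with $P, Q, R$ and therefore lies on $\kappa$; similarly every point of ${\cal D}$, including $S$, lies on $\kappa$. Consequently $S$ is concyclic in general with any three members of ${\cal C}$, so $S$ belongs to the equivalence class ${\cal C}$; swapping the roles of ${\cal C}$ and ${\cal D}$ gives the reverse inclusion, whence ${\cal C} = {\cal D}$.

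The main obstacle I expect is the same one that is silently glossed in the author's proof of Theorem 1: bridging the gap between an abstract set of geometric points being a \emph{circle} in the paper's formal sense and the corresponding unique circular locus in the plane. Concretely, one must rule out that a circle ${\cal D}$ could be a literal subset of a straight line, so that the three chosen points are genuinely non-collinear and the uniqueness of $\kappa$ applies. Once that subtlety is handled by the ``in general'' convention (as discussed after Definition~1 in connection with degenerate configurations), the remainder of the argument is as short as the author's own one-sentence proof of the line case.
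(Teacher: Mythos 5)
Your proposal is correct and follows essentially the same route as the paper, whose entire proof is the one-line observation that three non-collinear points designate a unique circle in Euclidean geometry. You simply spell out the details (distinctness, non-collinearity in general, and the double inclusion) that the paper leaves implicit.
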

\begin{proof}
In Euclidean geometry three non-collinear points always designate a unique circle.
\end{proof}

\begin{theorem}Let $\vec{D}$ and $\vec{E}$ be directions. Let $\ell\in \vec{D}$ and $m\in\vec{E}$. If $\ell\parallel m$
in general, then $\vec{D}=\vec{E}$.
\end{theorem}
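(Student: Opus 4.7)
The plan is to reduce the claim to the fact that parallelism of lines in Euclidean geometry is an equivalence relation (reflexive, symmetric, and transitive). Given this, two directions that share a parallelism link between any of their members must be indistinguishable.

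First I would fix arbitrary elements $\ell'\in\vec{D}$ and $m'\in\vec{E}$ and verify $\ell'\parallel m'$ by a three-step chain: $\ell'\parallel\ell$ because both belong to the same direction $\vec{D}$ (applying the defining property of Definition~4), then $\ell\parallel m$ by hypothesis, then $m\parallel m'$ because both belong to $\vec{E}$. Transitivity of parallelism then yields $\ell'\parallel m'$. Consequently the union $\vec{D}\cup\vec{E}$ satisfies the defining condition of a direction: every two distinct elements are parallel in general.

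The main obstacle is conceptual rather than computational: Definition~4 as written merely requires mutual parallelism of the members of a direction, without explicitly demanding maximality. To pass from ``$\vec{D}\cup\vec{E}$ is again a direction'' to the stated conclusion $\vec{D}=\vec{E}$, one has to read the definition, consistently with the treatment of \emph{point}, \emph{line}, and \emph{circle} in Theorems 1 and 2, as designating the equivalence class of all lines parallel to a given one. I would make this implicit maximality assumption explicit (or equivalently state the theorem as ``$\vec{D}$ and $\vec{E}$ are contained in a common direction''), after which the transitivity argument above closes the proof in one line. A footnote analogous to the one in Theorem~1 could note that the result depends only on the equivalence-relation character of parallelism, which already holds in affine geometry, not merely Euclidean.
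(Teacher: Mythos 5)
Your proof is correct and rests on the same key fact as the paper's one-line argument, namely the transitivity of parallelism. The maximality caveat you raise is a fair observation about Definition~4 as literally stated (the same issue lurks in the line and circle theorems), but the paper implicitly treats directions as the maximal sets maintained by the implementation, so your reading coincides with the intended one and your argument is essentially the paper's proof spelled out.
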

\begin{proof}
This follows immediately from the transitive property of parallelism.
\end{proof}

\begin{theorem}The relation $\upVdash$ is an equivalence relation on the parallel lines.
\end{theorem}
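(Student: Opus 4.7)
The plan is to verify each of the three equivalence-relation axioms for $\upVdash$ on the set of directions, treating the two alternatives in the definition (``parallel or perpendicular in general'') as a dichotomy inherited from Euclidean geometry.

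\textbf{Reflexivity.} For any direction $\vec{D}$, any two distinct lines $\ell,m\in\vec{D}$ are parallel in general, by the definition of a direction. Hence the defining property of $\upVdash$ holds with $\vec{D_1}=\vec{D_2}=\vec{D}$, and $\vec{D}\upVdash\vec{D}$.

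\textbf{Symmetry.} Both ``parallel in general'' and ``perpendicular in general'' are symmetric relations between lines, so swapping the roles of $\vec{D_1}$ and $\vec{D_2}$ preserves the defining condition.

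\textbf{Transitivity.} This is the step that will require the most care. Suppose $\vec{D_1}\upVdash\vec{D_2}$ and $\vec{D_2}\upVdash\vec{D_3}$. First I would establish a dichotomy lemma: whenever $\vec{D}\upVdash\vec{E}$, either every line of $\vec{D}$ is parallel in general to every line of $\vec{E}$ (in which case $\vec{D}=\vec{E}$ by Theorem~3), or every line of $\vec{D}$ is perpendicular in general to every line of $\vec{E}$. The point is that within each direction all lines are mutually parallel, so once a single representative pair is parallel (resp.\ perpendicular), the rest of the pairs must share the same relationship. With this lemma in hand, I would split into the four cases according to whether each of the pairs $(\vec{D_1},\vec{D_2})$, $(\vec{D_2},\vec{D_3})$ is of ``parallel type'' or ``perpendicular type.'' In the parallel/parallel case, $\vec{D_1}=\vec{D_2}=\vec{D_3}$ and reflexivity gives the result. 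In the parallel/perpendicular and perpendicular/parallel cases, a line in $\vec{D_1}$ and a line in $\vec{D_3}$ are perpendicular in general, so $\vec{D_1}\upVdash\vec{D_3}$. In the perpendicular/perpendicular case, one uses the standard Euclidean fact that two lines perpendicular to a common line are parallel, giving again $\vec{D_1}\upVdash\vec{D_3}$.

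The main obstacle is the dichotomy lemma underlying transitivity: one must argue that the ``parallel or perpendicular'' alternative is uniform across all representative pairs $(\ell,m)\in\vec{D_1}\times\vec{D_2}$, rather than varying from pair to pair. I would deduce this from Theorem~3 (parallelism is transitive on directions) together with the elementary fact that a line cannot be both parallel and perpendicular to another line in general position. Once this uniformity is in place, the remaining case analysis is routine planar-geometry bookkeeping.
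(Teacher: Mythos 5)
Your proof is correct and follows essentially the same route as the paper: reflexivity and symmetry are immediate, and transitivity is settled by the same four-case analysis (parallel/parallel, parallel/perpendicular, perpendicular/parallel, perpendicular/perpendicular), with the perpendicular/perpendicular case resting on the planar fact that the paper also flags. Your explicit dichotomy lemma merely makes precise a uniformity that the paper's proof leaves implicit, so it is a refinement in rigor rather than a different argument.
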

\begin{proof}
Reflexivity and symmetry are obvious. To check transitivity we assume that $\vec{D_1}\upVdash\vec{D_2}$
and $\vec{D_2}\upVdash\vec{D_3}$ hold and verify these four possible cases:
\begin{center}
\renewcommand\arraystretch{1.3}
\setlength\doublerulesep{0.3pt}
\begin{tabular}{c||c|c}
& $\vec{D_2}\parallel \vec{D_3}$ & $\vec{D_2}\perp \vec{D_3}$ \\
\hline\hline
$\vec{D_1}\parallel \vec{D_2}$ & $\vec{D_1}\parallel \vec{D_3}$ & $\vec{D_1}\perp \vec{D_3}$\\
\hline
$\vec{D_1}\perp \vec{D_2}$ & $\vec{D_1}\perp \vec{D_3}$ & $\vec{D_1}\parallel \vec{D_3}$
\end{tabular}
\end{center}
We note that it is a requirement that all the lines are defined in the \textit{plane}.
\end{proof}

\begin{theorem}Let $s$ and $t$ be segments. Let $\overline{u}\in s$ and $\overline{v}\in t$. If $|\overline{u}|=|\overline{v}|$
in general, then $s_1=s_2$.
\end{theorem}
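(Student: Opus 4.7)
The plan is to imitate the argument of the previous theorem on directions, since ``equally long in general'' is a transitive relation, just like ``parallel in general.'' The hypothesis supplies representatives $\overline{u}\in s$ and $\overline{v}\in t$ with $|\overline{u}|=|\overline{v}|$ in general. By the definition of $s$ as a set of equal length segments, every $\overline{x}\in s$ satisfies $|\overline{x}|=|\overline{u}|$ in general; by the analogous definition applied to $t$, every $\overline{y}\in t$ satisfies $|\overline{y}|=|\overline{v}|$ in general. Chaining these three equalities yields $|\overline{x}|=|\overline{y}|$ in general for all $\overline{x}\in s$ and all $\overline{y}\in t$.

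To pass from this common-length property to the conclusion $s=t$ (reading the final ``$s_1=s_2$'' as a typographical slip for $s=t$, in parallel with the conclusions of Theorems~1--3), I would invoke the same implicit maximality convention that is in force throughout Section~\ref{sec2}: a line, circle, direction, or set of congruent segments is understood to be the \emph{maximal} family of objects sharing its defining relation in general, so that these objects truly ``designate equivalence classes'' as the sentence introducing this block of theorems announces. Under that convention, any $\overline{y}\in t$ shares the common length of the members of $s$ and hence lies in $s$; symmetrically every $\overline{x}\in s$ lies in $t$, so $s\subseteq t\subseteq s$.

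The main obstacle is thus not geometric but conventional. Transitivity of length equality among segments is immediate (reducing to transitivity of equality of nonnegative reals, interpreted in the ``generic'' sense of the paper), so the whole weight of the proof rests on how one reads ``set of equal length segments.'' With maximality assumed, the theorem falls out in a single line of chained equalities; without it, one can only conclude that $s\cup t$ is again a set of equal length segments, which is precisely what the chained equalities actually establish.
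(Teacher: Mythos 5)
Your proposal is correct and rests on the same idea as the paper, whose entire proof is that the claim ``is an immediate consequence of the transitive property of equality of lengths.'' You simply spell out what the paper leaves implicit: the chaining of length equalities via transitivity, the maximality (equivalence-class) reading of ``equal length segments'' needed to turn the common-length property into set equality, and the reading of the misprinted conclusion ``$s_1=s_2$'' as $s=t$.
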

\begin{proof}
This is an immediate consequence of the transitive property of equality of lengths.
\end{proof}

By using these theorems we can maintain a minimal set of objects during discovery. See \cite{kovacs-yu-2020}
for a more detailed description of the technical internals.

\section{Examples of Discover with selected theorems}

GeoGebra is a well-known and widely used software tool in education, with meaningful potential for using
geometric discovery and exploration to teach elementary geometry. Even so, the range of mathematical knowledge
is broad, including secondary school topics, international math competitions, and higher level mathematics.
Below we examine selected theorems confirmed in the current implementation of Discover.

\subsection{Euler line}

The Euler line is a
line determined from any triangle that is not regular. It
passes through the orthocenter, the circumcenter and the centroid.
The problem is shown in Fig.~\ref{EulerLine1}.
\begin{figure}\centering
\includegraphics[scale=0.75]{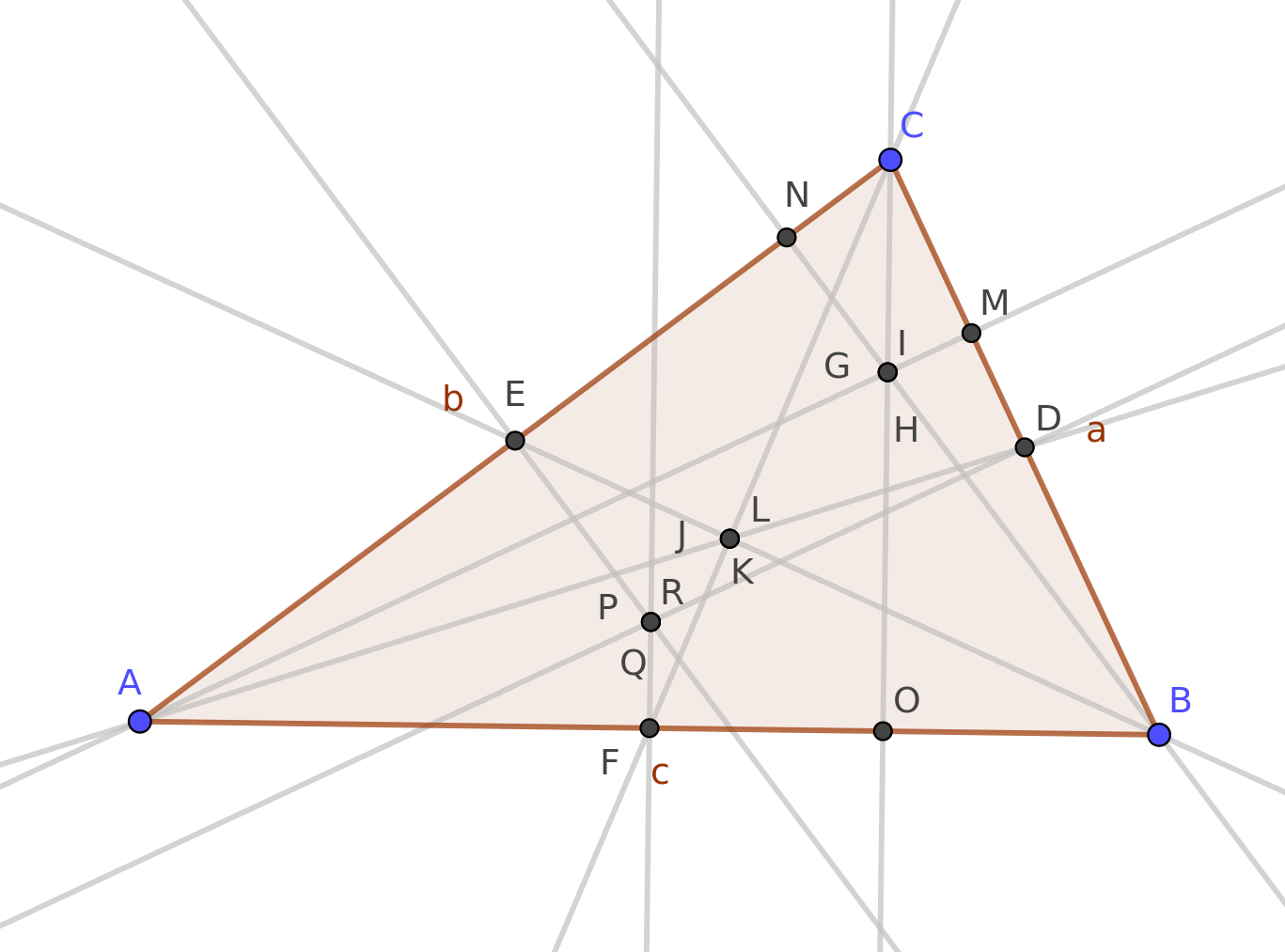}
\caption{Euler line}
\label{EulerLine1}
\end{figure}
With discovery on point $P$, the relevant theorems are listed in Fig.~\ref{EulerLine2}.
\begin{figure}\centering
\includegraphics[scale=0.65]{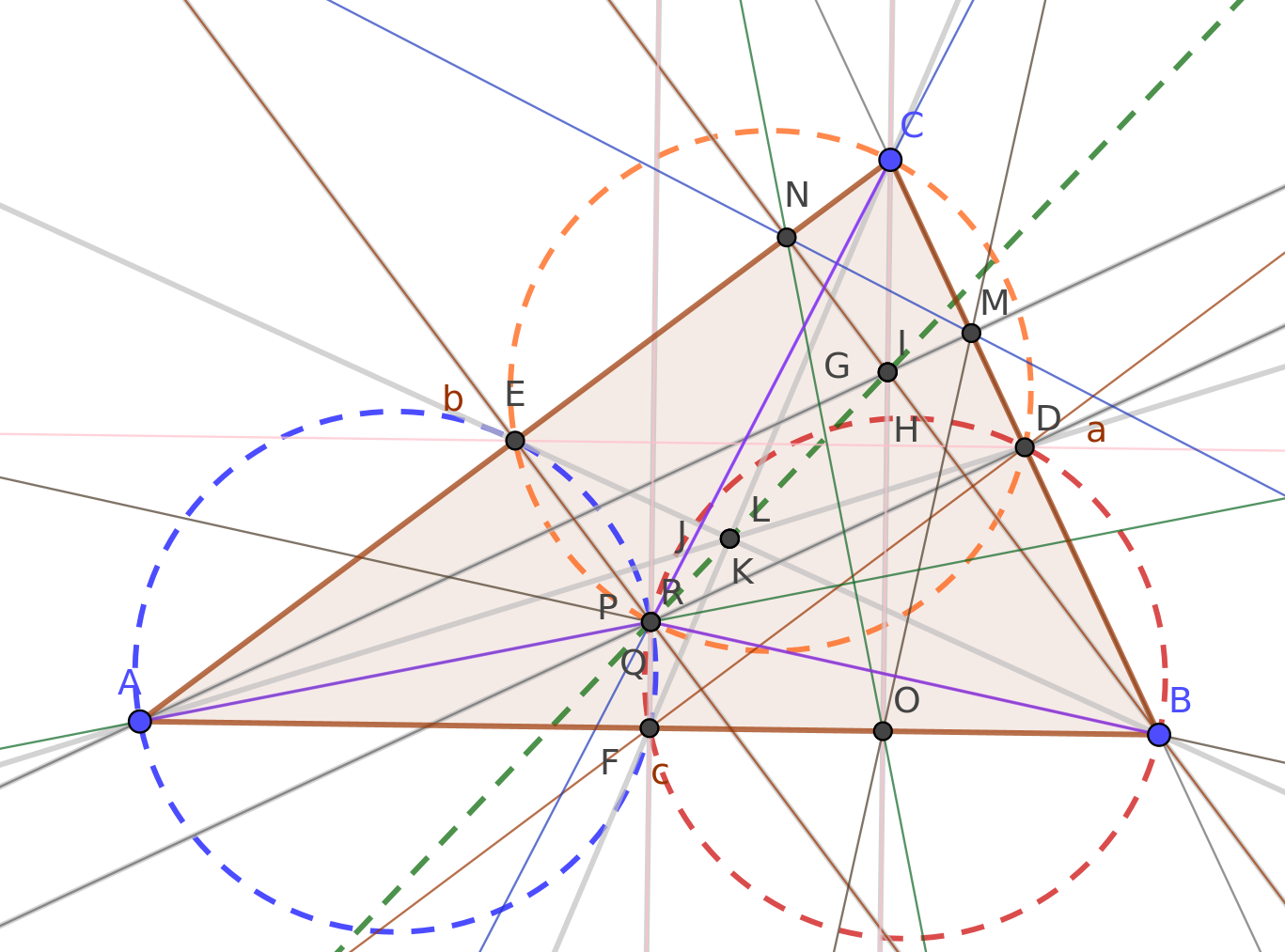}
\includegraphics[scale=0.35]{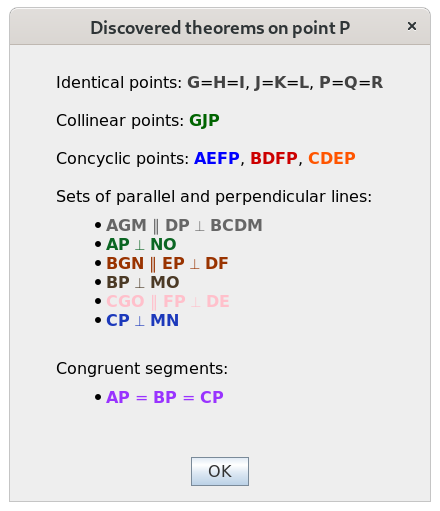}
\caption{Output of the command \texttt{Discover($P$)}}
\label{EulerLine2}
\end{figure}
The Euler line theorem implicitly includes several simple theorems, including 
concurrency of the altitudes ($G=H=I$, these points being the pairwise intersections of the altitudes),
concurrency of the medians
of a triangle ($J=K=L$, the generated points being the pairwise intersections of the medians),
and concurrency of the perpendicular bisectors of the altitudes ($P=Q=R$, pairwise intersections as above).

\subsection{Nine-point circle}

The nine-point circle passes through nine significant
points of an arbitrary triangle, namely:
\begin{itemize}
\item the midpoint of each side of the triangle,
\item the foot point of each altitude,
\item the midpoint of the line segment from each vertex of the triangle to the orthocenter.
\end{itemize}
With discovery on the midpoint $D$ of side $BC$, the appropriate theorems are reported in Fig.~\ref{9pc2}.
\begin{figure}\centering
\includegraphics[scale=0.75]{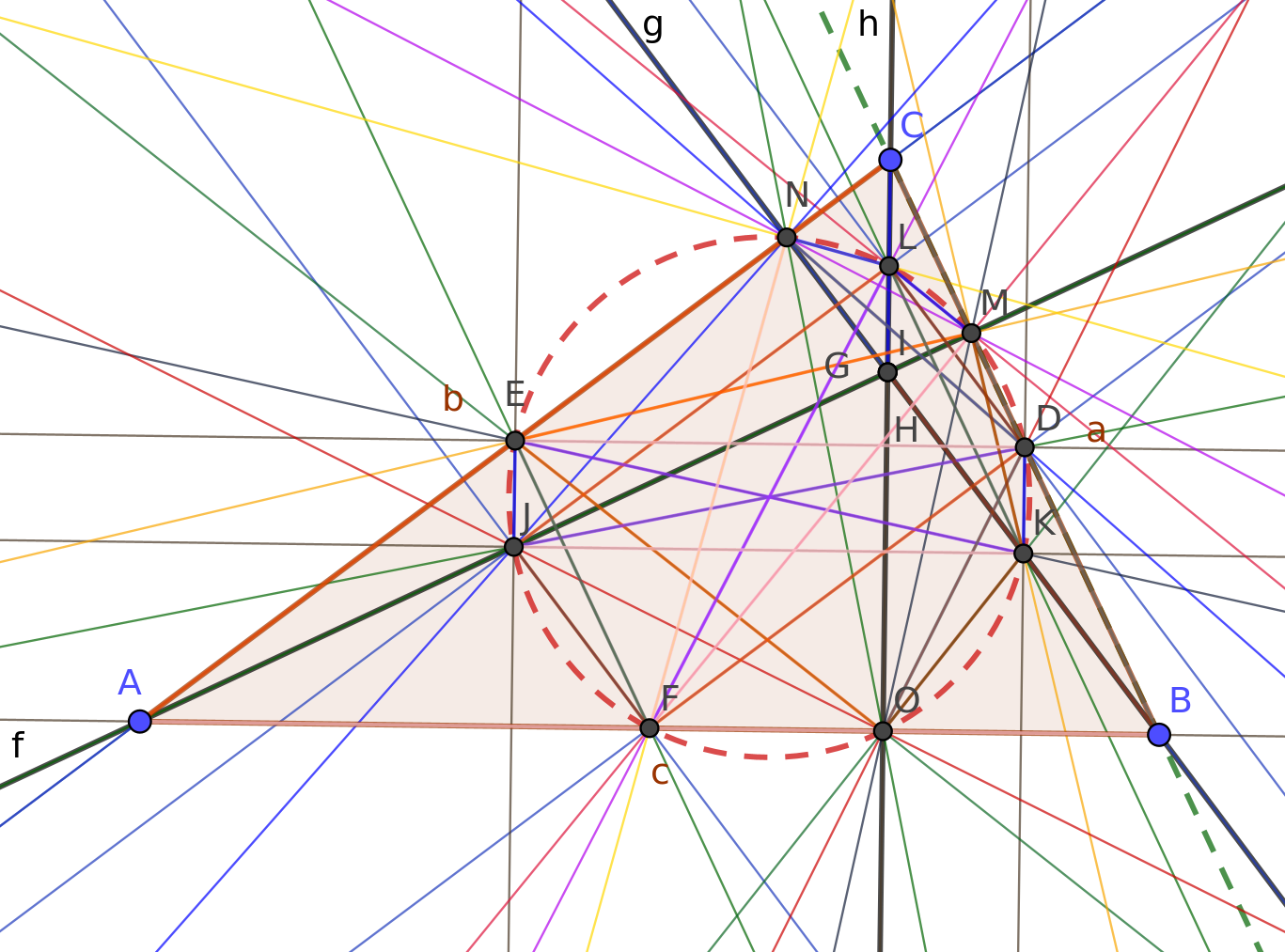}
\includegraphics[scale=0.4]{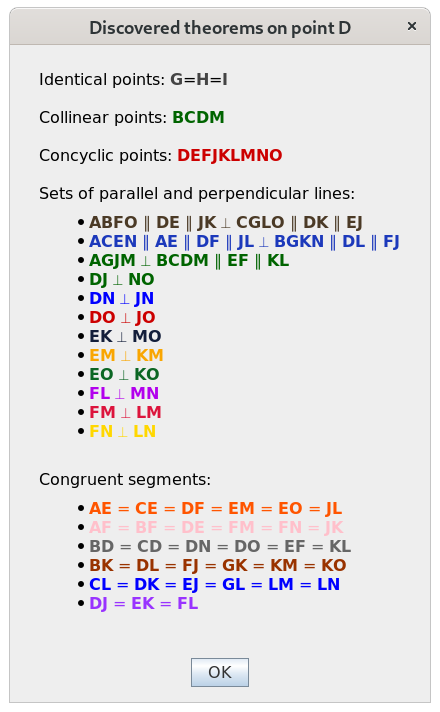}
\caption{Output of the command \texttt{Discover($D$)}}
\label{9pc2}
\end{figure}
The nine-point circle theorem implicitly includes several other simple theorems. In addition, the graphical result
suggests further theorems: segments $DJ$, $EK$, and $FL$ are congruent and concurrent;
these three segments are also diameters of the nine-point circle; and
their intersection designates the center of the nine-point circle. 
This can be easily confirmed by performing another discovery.

\subsection{A contest problem}

In 2010, at the 51st International Mathematics Olympiad in Astana, Kazakhstan, the following shortlisted
problem was proposed by United Kingdom:
\begin{quotation}
Let $ABC$ be an acute triangle with $D$, $E$, $F$ the feet of the altitudes lying on $BC$, $CA$, $AB$
respectively. One of the intersection points of the line $EF$ and the circumcircle is $P$. The lines
$BP$ and $DF$ meet at point $Q$. Prove that $AP=AQ$.
\end{quotation}
After constructing an acute triangle with GeoGebra Discovery (noting that
no matching figure can be drawn for a non-acute input angle), we start
discovery on point $Q$.

The discovered theorems appear in Fig.~\ref{contest2}.
We learn a few unexpected properties: $DP\parallel EQ$, and the points $C$, $D$, $P$, $Q$,
and $A$, $F$, $P$, $Q$ are concyclic.
\begin{figure}\centering
\includegraphics[scale=0.66]{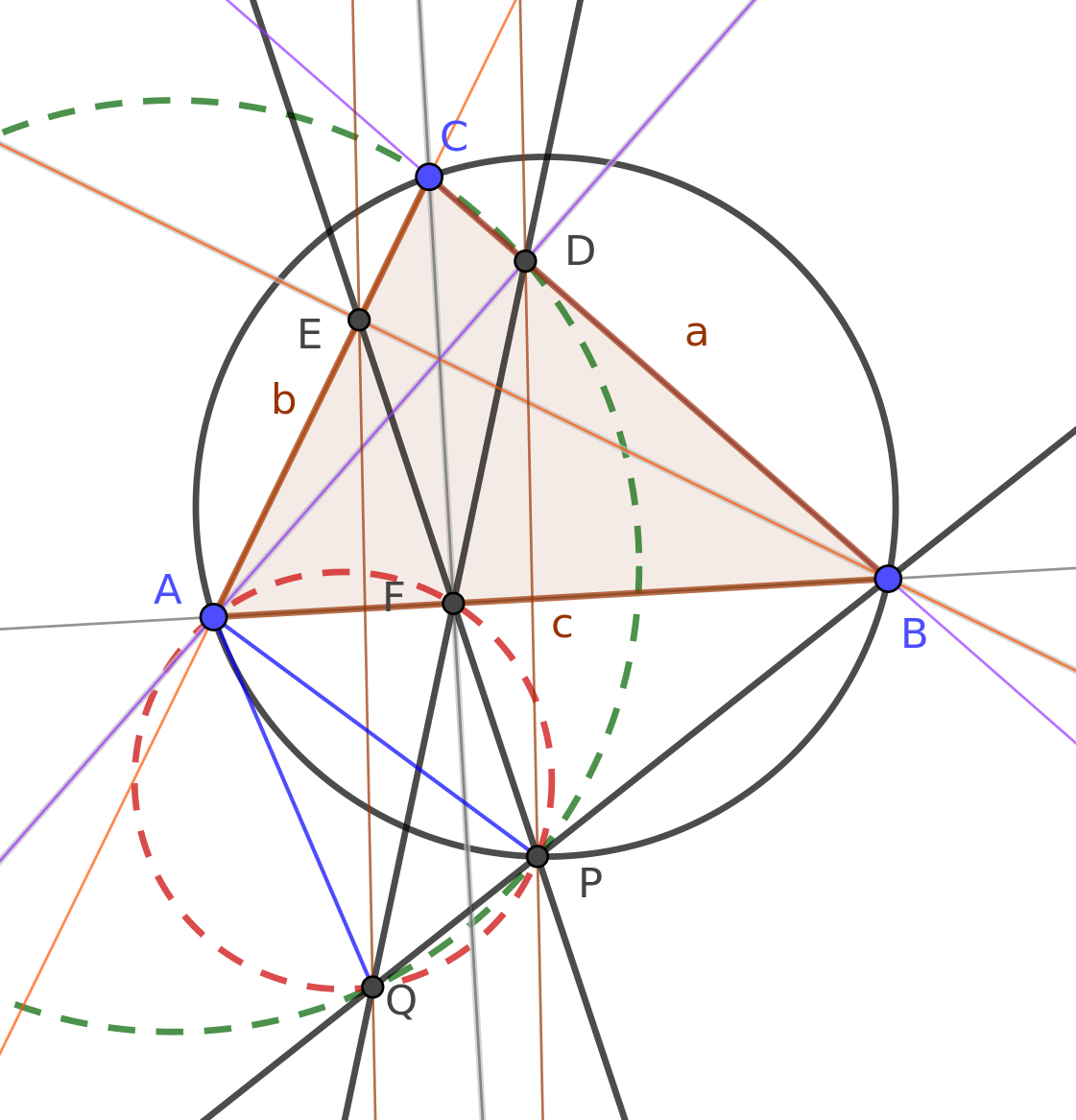}
\includegraphics[scale=0.4]{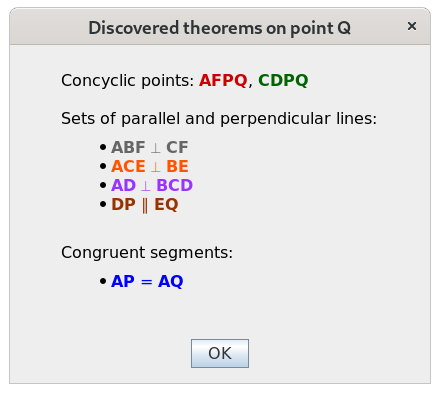}
\caption{Output of the command \texttt{Discover($Q$)}}
\label{contest2}
\end{figure}

\section{Discussion}\label{sec4}

\subsection{Trivial statements and theorems}

After the Discover command determines the salient properties for a given input,
it displays relevant \textit{theorems} but does not report \textit{trivial statements}.
One example of a trivial statement occurs in Fig.~\ref{rel-midline}, in which the\
collinearity of points $B$, $C$ and $D$ and of points $A$, $C$ and $E$ are not reported.
By defining $D$ as the midpoint of $BC$, this collinearity is implicitly assumed,
so it does not make any sense to reiterate this.

The question of which properties are considered trivial or not
has been a topic of long term discussion.
Over 30 years ago, in his famous book, Larry Wos asked,
``What properties can be identified to permit an automated reasoning
program to find new and interesting theorems, as opposed to proving conjectured theorems?''
 \cite{Wos88}.
This decision, at some level, does become a judgment call that may vary depending on the expertise of the user.
Most users would regard the above statement $BD=CD$ as trivial, given that $D$ is the midpoint of $BC$. On the other hand, explicitly stating this property might be helpful
for some beginners.
Currently, GeoGebra Discovery regards as trivial
all statements that directly follow
from the parameters of the original problem, 
as demonstrated in the example above.
However, more detailed criteria to determine which theorems
to display could be clarified in the future.
Recent research may help in crafting effective algorithms that prioritize identifying
the most interesting theorems
while filtering out less important statements (see \cite{GaoLiCheng2019,PuzisGaoSutcliffe2006}).

\subsection{Combinatorial explosion and computational complexity}
By using the classes of the equivalence relations, the number of statements
to be checked can be decreased significantly.

For each possible statement, a numerical check is first performed.
Unfortunately, for some exotic coordinates of the points in the figure,
the numerical check can be completely misleading.
For example, very large numbers can sometimes result in numerically unstable computations.
Regardless, if a numerical check is positive, then the statement is added
to the list of conjectures, but if it is negative, no conjecture is registered.
As a consequence, while our implementation
may miss some true statements (due to numerical errors), it will not output false statements.

For each conjecture, a symbolic check will be performed. If the symbolic check is positive,
then the statement will be saved as a theorem. If the symbolic check is negative, then the statement will be
removed from the list of conjectures. If the symbolic check cannot determine whether
a conjecture is true or false, the conjecture is removed from the list.

A special case of a conjecture is $P_1=P_2$ for each two geometric points. If this conjecture cannot be proven or disproven symbolically, then the discovery process will be halted. The user will then be notified
that the construction must be redrawn in a different way, or else no output can be produced. This is required to maintain the consistency of the internal data.

Symbolic checks usually require more time than numerical verifications. The underlying computation
uses Gr\"obner bases that require at most double exponential time of the number of variables \cite{mayrmeyer82} according
to the algebraic translation of the given figure. Usually, the number of variables are double the number of
geometric points in the figure (since there are two coordinates for each).

GeoGebra internally sets 5 seconds for the maximal execution time of each symbolic check. After timeout
the result of the symbolic check will be undecided.

Complexity can be decreased by hiding some points in the figure. This
allows for faster processing and avoids an overly cluttered output
containing too much information.

\subsection{User interface enhancements}

Currently only points can be investigated. In a future version, a set of points, segments,
lines, circles or a set of these could be allowed as input.

At the moment the computation process cannot be interrupted by the user. Given a large number of points
in the figure, the calculation can be time consuming. For example, investigating the relationships
of a regular 20-gon may require about 4 minutes on a modern personal computer (in our test
a Lenovo ThinkPad E480 with an i7 processor, 16 GB RAM, Ubuntu Linux 18.04, was used).

The current version, which is based on GeoGebra Classic 5, performs better than the one on Classic 6. The
latter is a web implementation of the GeoGebra application and uses a WebAssembly compilation
of the computer algebra system Giac. Even if the code is reasonably fast as embedded code
in a web page, Classic 6 underperforms the native technology: the same hardware
is unable to handle the input of the regular 20-gon,
with Google Chrome 83 crashing after 12 minutes of computation.

\subsection{Angles}
In a complex algebraic geometry setting, the study of angles is not as straightforward as investigating other objects.
For a future version, however, this feature would be an important improvement.

By combining algebraic and pure geometric observations, however, simple theorems on angle equality could
be easily detected. For example, Fig.~\ref{contest2} states that points $A$, $F$, $P$, $Q$ are concyclic.
The inscribed angle theorem automatically implies $\angle QAP=\angle QFP$, among others.

\subsection{Stepwise suggestions}

Prior research (see \cite[p.~46]{matematech}) proposed that collecting the interesting new objects
in a figure could be done stepwise, similarly to GeoGebra's former feature ``special objects.''
For our midline theorem example (Fig.~\ref{midline1}), this meant that after constructing the triangle $ABC$ and then midpoint $D$,
the system automatically displayed the segments $BD$ and $CD$.
The user could then accept these newly generated segments or remove them from the system.
Then, by creating midpoint $E$, the system could show lines $AB$ and $DE$ to visualize parallelism.

However, this ``special objects'' feature was recently removed from GeoGebra after feedback that many users found this feature confusing.
How stepwise suggestions might be implemented in GeoGebra Discovery remains a topic of future investigation.

\section{Related work}

We now discuss several projects that share some similarity to GeoGebra Discover but differ 
in significant  ways.

First, GeoGebra Discovery is not the first tool that systematically displays confirmed theorems in a geometric
figure. We refer the reader to 
\begin{itemize}
\item Zlatan Magajna's \textit{OK Geometry}\footnote{\url{www.ok-geometry.com}} \cite{Magajna2011},
\item Jacques Gressier's \textit{G\'eom\'etrix}\footnote{\url{geometrix.free.fr}} and
\item \textit{Java Geometry Expert}\footnote{\url{https://github.com/yezheng1981/Java-Geometry-Expert}}
(JGEX) \cite{Ye_2011}.
\end{itemize}
These systems are available free of charge, but the source code is available only for JGEX.
On the other hand, GeoGebra Discovery focuses on an intuitive user interface
and proofs in the most mathematical sense.

Second, we note that there is a growing interest in creating algorithms related to success completion of 
secondary school or undergraduate mathematics entrance exams. (See \cite{robot-china,text-diagram,japan-todai}, among others.)
Sometimes these projects rely significantly on techniques used in the underlying computational methods.
Also, these projects are often related to artificial intelligence and Big Data rather than to
computational mathematics.

Third, we mention a theoretical issue. The idea to store a geometric point only once if it is identical to another one was previously described in Kortenkamp's work \cite[9.3.1]{ulli99}. This 
concept is a main design element in the dynamic geometry software Cinderella, which
never stores a geometric point twice if the two variants are identical in general.

GeoGebra has a different design concept by allowing the user an arbitrary number of identical points to
be defined. From the theorem prover's point of view, GeoGebra's concept is more difficult to handle,
and a kind of translation is required to have a different data structure by using the
concepts from Section \ref{sec2}.

Finally, we note that GeoGebra Discovery proves the truth in a different manner from Cinderella,
with Cinderella using a probabilistic method,
and GeoGebra Discovery literally \textit{proving} all the deduced facts.

\section{Conclusion}

We described a prototype of the \texttt{Discover} command that is available in an experimental version of GeoGebra, called GeoGebra Discovery. 
This tool facilitates geometric analysis to determine important properties and theorems
based on the user's input.
Our current implementation can be directly
downloaded from \url{https://github.com/kovzol/geogebra/releases/tag/v5.0.641.0-2021Sep03}.

Active testing and sample outputs of the \texttt{Discover} command can be seen at
\url{https://prover-test.geogebra.org/job/GeoGebra_Discovery-discovertest/}.
Currently, 19 test problems are successfully demonstrated, 
including Brahmagupta's theorem,
Napoleon's theorem, Thebault's first two theorems, as well as some simpler statements.
Areas for future development are listed in Section \ref{sec4}.

\section{Acknowledgments}

The \texttt{Discover} command is a result of a long collaboration of several researchers.
The project was initiated by Tom\'as Recio in 2010, and several other researchers joined,
including Francisco Botana and M.~Pilar V\'elez, to name just the most prominent collaborators.
The development and research work was continuously monitored and supported by the GeoGebra Team.
Special thanks to Markus Hohenwarter, project director of GeoGebra.

\bibliography{kovzol,external}

\end{document}